\newcommand{\reals}{\mathbb{R}}
\newcommand{\integers}{\mathbb{Z}}
\newcommand{\naturals}{\mathbb{N}}
\newcommand{\T}{\top}
\newcommand*\xbar[1]{%
  \hbox{%
     \vbox{%
      \hrule height 0.7pt 
      \kern0.35ex
      \hbox{%
         \kern-0.1em
         \ensuremath{#1}%
         \kern-0.1em
      }%
     }%
  }%
} 
\newenvironment{proof}{\textbf{Proof.~}}{\hfill$\square$}
\begin{document}

\begin{frontmatter}




\title{Newton-series-based observer-predictor control for disturbed input-delayed discrete-time systems\thanksref{footnoteinfo}}




\thanks[footnoteinfo]{This paper was not presented at any conference. Thanks to the Brazilian agency CAPES for financial support.\\ $^\ast$ Corresponding author. \\
\phantom{aa}\textit{Email addresses:} \textbf{\scriptsize thiago.alveslima@uclouvain.be} (T. A. Lima), \textbf{\scriptsize valessa@alu.ufc.br} (V. V. Viana), \textbf{\scriptsize bismark@dee.ufc.br} (B. C. Torrico), \textbf{\scriptsize fnogueira@dee.ufc.br} (F. G. Nogueira), \textbf{\scriptsize dmadeira@dee.ufc.br} (D. de S. Madeira).}

\vspace{-0.5cm}
\author[UC]{Thiago Alves Lima*}, \author[UFC]{Valessa V. Viana}, \author[UFC]{Bismark C. Torrico},  \author[UFC]{Fabrício G. Nogueira},  \author[UFC]{Diego de S. Madeira} 

\address[UC]{ICTEAM, UCLouvain, 4 Av. G. Lema\^itre, 1348 Louvain-la-Neuve, Belgium}

\address[UFC]{Department of Electrical Engineering, Federal University of Ceará, Fortaleza, CE, Brazil}  

          \vspace{-0.5cm}
\begin{keyword}                           
Predictive control; Time delays; Unknown disturbances; Linear matrix inequalities; Robust control.               
\end{keyword}                             

\vspace{-0.3cm}
\begin{abstract}                          
This paper deals with the problem of predicting the future state of discrete-time input-delayed systems in the presence of unknown disturbances that can affect both the state and the output equations of the plant. Since the disturbance is unknown, computing an exact prediction of the future plant states is not possible. To circumvent this problem, we propose using a high-order extended Luenberger-type observer for the plant states, disturbances, and their finite difference variables, combined with a new equation for computing the prediction based on Newton's series from the calculus of finite differences. Detailed performance analysis is carried out to show that, under certain assumptions, both enhanced prediction and improved attenuation of the unknown disturbances are achieved. Linear matrix inequalities (LMIs) are employed for the observer design to minimize the prediction errors. A stabilization procedure based on an iterative design algorithm is also presented for the case where the plant is affected by time-varying uncertainties. Examples from the literature illustrate the advantages of the scheme.
\end{abstract}

\end{frontmatter}

\section{Introduction}

Time delays have been extensively studied over the years due to their harmful impact on the closed loop, which can cause undesired oscillatory behavior or even instability. Due to being infinite-dimensional systems (in the continuous-time case), the analysis and control of time-delayed plants are evolved when compared to non-delayed ones \citep{Fridman_2014,Gu2003}.

When controlling such systems, a well-established idea is that of predicting the system output or state ahead of the delay and then to feedback such prediction so that the closed-loop system is equivalent to that of a non-delayed system. In such cases, one can say that the delay has been \textit{compensated}, that is, its undesired effects have been mitigated by means of prediction. This idea was first developed in frequency domain for single-input single-output (SISO) open-loop stable systems by the seminal work of \citet{Smith_1957} and was later extended to multiple-input multiple-output (MIMO) open-loop unstable systems with the aid of time-domain analysis \citep{Manitius_1979,Artstein_1982}. Recently, interest in the problem of prediction has resurfaced in the context of unknown disturbances. In case the system is affected by such disturbances, it becomes impossible to perfectly predict the future states of the system due to the solution being dependent on future values of the disturbance. Some important works in the last years have tackled this problem. 

To deal with unknown disturbances, \citet{LECHAPPE2015179} proposed a solution based on a modification to the Artstein predictor \citep{Artstein_1982}. The main idea consisted of adding a term that compares the current state of the plant with the delayed prediction, which leads to some information about the disturbance being feedbacked into the scheme. Both \citet{SANZ2016205} and \citet{CASTILLO2021} employ high-order extended observers capable of estimating the disturbance and its derivatives up to some order $r$. Then, such observations are used to help define new predictive schemes that lead to a decrease in the error caused by the unknown disturbance. In \citet{Furtat_2018}, the Finite Spectrum Assignment (FSA) idea from \citet{Manitius_1979} is used along with a disturbance predictor to propose a new control law for disturbance compensation for input-delayed systems. In \citet{HAO2019}, the idea of sequential predictors was introduced, where the need to store past values of the control input is relinquished. More recently, \citet{Wu_2021} tackled the problem of predicting the states of discrete-time input-delayed systems in the case of unknown disturbances by proposing two modifications to the idea originally published for continuous-time systems in \citet{LECHAPPE2015179}. In \citet{GONZALEZ2021}, new advances were reported concerning the consideration of time-varying delays.

In this same vein, in this work we aim at proposing a new prediction scheme for input-delayed systems affected by unknown disturbances that can affect both the state and output equations of the plant. Differently from \citet{Wu_2021}, direct measurement to the plant state is not assumed to be available, which is a situation commonly found in industrial applications. Inspired by \citet{CASTILLO2021}, we employ a high-order extended state observer which is capable of estimating the disturbance and its finite differences up to some order $r$. The main idea consists then of plugging such estimations into a truncated version of the Newton series (from the calculus of finite differences) to estimate future disturbances, which are then used into the solution of the plant to generate a new predictive scheme. We demonstrate that, when mild assumptions are met, such scheme is capable of generating enhanced results with respect to both disturbance attenuation and minimization of prediction errors compared with the recent literature. We also develop conditions for the robust design of the predictive controller scheme in the case of time-varying uncertainties affecting the plant.

The main novelties with respect to \cite{CASTILLO2021} are: i) we use the newton-series to develop the new predictive scheme, which had not been employed before; ii) the closed-loop stability and stabilization in the case of plant time-varying uncertainties is developed; iii) detailed analysis of the disturbance attenuation characteristics of the proposed predictive-based control loop is presented using the reduction approach, while we show that perfect attenuation can be achieved for a big class of disturbances with a simple modification in the control law. The advantages of the strategy are also demonstrated in numerical examples borrowed from \cite{HAO2019} and \cite{GONZALEZ2021}.


\textbf{Notation.} For matrices $W$ and $Z$ in $ \reals^{n \times n}$, $W \succ Z$ means that $W-Z$ is positive definite. \textit{diag}$(W,Z)$ corresponds to the block-diagonal matrix. $\mathbb{S}_n^{+}$ stands for the set of symmetric positive definite matrices. $I$ and $0$ denote identity and null matrices. The symbol $\star$ denotes symmetric blocks in the expression of a matrix. For integers $a < b$, we use $[a,b]$ to denote the set $\{a,a+1,\dots,b-1,b\}$. For a discrete function $f(k) : \integers \rightarrow \reals^{n}$, the forward difference operator is defined by $\Delta f(k)=f(k+1)-f(k)$. Similarly, $\Delta^2 f(k)= \Delta f(k+1) - \Delta f(k)$, $\dots$, $\Delta^{r+1} f(k)= \Delta^r f(k+1) - \Delta^r f(k)$. Furthermore, we define $\Delta^0 f(k) = f(k)$. Additionally, the $d$ steps backwards different is denoted as $\nabla_d f(k) = f(k)-f(k-d)$. We use the notation $\|f(k)\|$ to denote the vector norm $\sqrt{f^{\T}(k)f(k)}$, whereas $\| f \|_{l_2^{loc}}$ is used to denote the local $l_2$ norm of $f(k)$, given by $\sqrt{{\sum}_{\tau=0}^{k} \|f(\tau)\|^{2}}$. Finally, $ \binom{x}{n}=\frac{x^{\underline{n}}}{n!}$ denotes the binomial coefficient, where $x^{\underline{n}}$ stands for the falling factorial $x^{\underline{n}}= { \prod_{j=0}^{n-1}}(x-j)$.
\section{Problem formulation}
\label{sec:ProblemFormulation}

\subsection{General view}\label{subsec:generalview}

Consider an input-delayed plant given by
\begin{equation}\label{eq:plant}
    \begin{cases}
    x(k+1)=A_\Omega(k) x(k) + B_\Omega(k) (u(k-d)+w(k)),\\
    y(k)=C x(k) + D_w w(k),\\
    u(k) = \mathcal{u}(k),~k\in[-d,-1], 
    \end{cases}
\end{equation}
where \(x(k) \in \reals^{n_p},~ u(k) \in \reals^{m_u},~w(k) \in \reals^{q}\), and \(y(k) \in \reals^{m_y}\) are the plant state, the control input, the unknown disturbance, and the plant output, respectively. Furthermore, $x(0)$ and $\mathcal{u}(k)$ define the system initial condition. The plant input delay $d \geq 0$ is assumed to be known. $A_{\Omega}(k)=A+\Omega_A(k)$, $B_{\Omega}(k)=B+\Omega_B(k)$, where $\Omega_A(k)$ and $\Omega_B(k)$ are the time-varying uncertainties in the model and $A$, $B$, $C$, $D_w$ are known matrices of appropriate dimensions satisfying the following assumption.
\begin{assum}\label{assump:system}
The pair $\left(A,C\right)$ is observable and
\begin{equation}
    rank\left( \begin{bmatrix}
    A - I_{n_p} & \phantom{-}B \\
    -C & -D_w
    \end{bmatrix}\right) = n_p+q.
\end{equation}
\end{assum} 
Moreover, the time-varying model uncertainties are described as in \cite{GONZALEZ2021}:
\begin{equation}\label{uncertainties}
    (\Omega_A(k),\Omega_B(k))=\lambda E\Omega(k)(F_A,F_B)
\end{equation}
where \(\lambda \geq 0\) is a scalar that determines the size of uncertainties, \(\Omega(k) \in \mathbb{R}^{l_1 \times l_2}\) is an unknown time-varying matrix satisfying \(\Omega(k)^\T \Omega(k) \leq I, ~\forall k \geq 0\), and \(E, F_A, F_B\) are constant matrices.


Consider the nominal system, i.e., $\lambda=0$. From recursion in \eqref{eq:plant}, an exact prediction $x(k+d)$ can be found as
\begin{subequations}
  \begin{gather}
\label{eq:exactprediction}   x(k+d)= \hat{x}_1(k+d) + \sum_{j=1}^{d} A^{j-1} B w(k+d-j), \\ 
 \label{eq:classicalprediction}  \hat{x}_1(k+d)=A^d x(k) + {\displaystyle \sum^{d}_{j=1}A^{j-1}}B u(k-j). 
  \end{gather}
\end{subequations}
One can notice, however, that equation \eqref{eq:exactprediction} cannot be computed since it depends on the current and future values of the disturbance, that is the values \(w(k+s),~ s \in [0,d-1]\). Having knowledge of such values is unthinkable in almost all real systems, therefore strategies to compute an approximation of \eqref{eq:exactprediction} have been studied along the years. One classical choice to compute approximated predictions is to ignore the term due to the disturbance in \eqref{eq:exactprediction}, leading to the classical prediction $\hat{x}_1(k+d)$ in \eqref{eq:classicalprediction}. Clearly, \eqref{eq:classicalprediction} results in a prediction error given by
\begin{equation}\label{classicpredic:error}
    x(k+d)-\hat{x}_1(k+d)={\displaystyle\sum_{j=1}^{d}} A^{j-1} B w(k+d-j).
\end{equation}
The recently published strategy in \citet{Wu_2021} showed that such choice is not appropriate due to large prediction errors that lead to poor attenuation of the disturbances. To deal with this, \citet{Wu_2021} proposes two predictors based on the original idea for continuous-time plants presented in \citet{LECHAPPE2015179}, which are given below
\begin{subequations}
  \begin{gather}
    \hat{x}_2(k+d)=\hat{x}_1(k+d)+x(k)-\hat{x}_1(k), \\
    \hat{x}_3(k+d)=\hat{x}_2(k+d)+x(k)-\hat{x}_2(k).
  \end{gather}
\end{subequations}
In this same vein, in this paper we will also deal with the problem of finding a new prediction, namely $\hat{x}_4(k+d)$, that leads to smaller errors and consequently better attenuation of disturbances. To this end, we first make the following assumption on $w(k)$.
\begin{assum}\label{assump:disturbance}
The disturbance $w(k)$ is \((r+1)\)-times finite differentiable with respect to the \(\Delta\) operator so that for any positive integer $r$ 
\begin{align}
\|\Delta^{(r+1)}w(k)\| \leq \delta, 
\end{align}
\noindent which implies that $\| \Delta^{(r+1)}w \|_{l_2^{loc}} \leq \delta\sqrt{k+1}$, 
\noindent for $ 0\leq k < \infty$, which means that $\Delta^{(r+1)}w(k)$ belongs to the $l_2^{loc} (\integers)$ space. 
\end{assum}





\section{Prediction scheme}
\label{sec:PredictionScheme}
In this section we present a new predictive scheme for system \eqref{eq:plant}. The main idea consists of employing a high-order extended observer that allows to estimate the disturbances and their finite differences $\Delta$ up to order $r$. Such observations are then plugged into a series to approximate the future values of the unknown disturbance, leading to the computation of the predictions. We start the section by defining the high-order extended observer equation, followed by presentation of the proposed predictive scheme. 

\subsection{High-order extended state observer}

Since system $\eqref{eq:plant}$ does not give direct measurement to the plant state, a first step to compute a prediction is to estimate the value of $x(k)$, which can be done by means of a state observer. Furthermore, to deal with the disturbance summation error term \eqref{classicpredic:error}, it is necessary to gather some knowledge about the disturbance as well, which can help decrease the error caused by this term. Such task can be accomplished by means of an extended state observer. In this work we employ observers that allow to estimate not only the plant state and the disturbance signal but also the difference operators (up to order $r$) of the disturbance. The idea of using high-order extended observers is inspired by \citet{CASTILLO2021}, which deals with continuous-time systems, where the disturbances and their time derivatives are observed. Herein, we propose the use of the following high-order observer to deal with the case of discrete-time systems
\begin{equation} \label{observer}
    \begin{cases}
    \hat{\eta}(k+1)=\xbar{A}
    \hat{\eta}(k) + \xbar{B}_u u(k-d) + L e_y(k),\\
    \hat{y}(k)=\xbar{C}\hat{\eta}(k),
    \end{cases}
\end{equation}
\noindent where  \(\hat{\eta}(k)=\left[\hat{x}(k)^\T ~~ \hat{\bm{w}}^\T(k)\right]^\T \in \reals^n\), with $n=n_p+(r+1)q$, is the observer state, $e_y(k) = y(k)-\hat{y}(k)$ is the observation error, $L \in \reals^{n \times m_y}$ is the observer gain to be designed, $\xbar{C}=\begin{bmatrix}
    C & D_w &  0_{n_p \times rq} 
    \end{bmatrix}$, and
    \vspace{-0.1cm}
\begin{equation*}
   \begin{split}
        &\xbar{A}=
    \begin{bmatrix}
    A & \left[B ~~ 0_{n_p \times rq}\right]\\
    0_{(r+1)q \times n_p} & \Pi
    \end{bmatrix},~~ \xbar{B}_u=\begin{bmatrix}
    B \\
     0_{(r+1)q \times m_u} 
    \end{bmatrix}, \\
    & \Pi = I_{(r+1)q} + \begin{bmatrix}
    I_q & \dots & I_q & 0_{q \times q} 
    \end{bmatrix}^{\T} \begin{bmatrix}
    0_{q \times q} & I_q & \dots & I_q
    \end{bmatrix}.
   \end{split}
\end{equation*}
System \eqref{observer} is a Luenberger-type observer for the augmented variable \(\eta(k)=\left[x(k)^\T ~~ \bm{w}^\T(k)\right]^\T\) $\in \reals^{n}$, with \(\bm{w}^\T(k)=\left[w^\T(k)~\Delta w^\T(k)~\cdots~\Delta^r w^\T(k) \right]^\T\), which under Assumption \ref{assump:disturbance} satisfies ( with $\xbar{B}_w = \begin{bmatrix}
0_{q \times rq+n_p} & I_q \end{bmatrix}^\T$)
\begin{equation}\label{eq:Eta}
    \eta (k+1) = \xbar{A}
    {\eta}(k) + \xbar{B}_u u(k-d) + \xbar{B}_w \Delta^{r+1} w(k).\\
\end{equation}
\vspace{-0.8cm}
\begin{lem}\label{lemma:obs}
Under Assumption \ref{assump:system}, the pair $\left(\xbar{A}, \xbar{C}\right)$ is observable, allowing proper estimation of the variable $\eta(k)$.
\end{lem}
\begin{proof}
The proof follows from the Hautus Lemma for observability \citep{sontag1998mathematical,Chang_2006}, and is omitted due to space constraints.
\end{proof}

\subsection{A new expression for the prediction}


From the calculus of finite differences, the following expression, known as the Newton series, holds for the operator \(\Delta\) \citep{jordan1950calculus,Rota1994}
\begin{equation}\label{eq:sum_w}
    w(k+s)=\sum_{m=0}^{\infty} \binom{s}{m} \Delta^m w(k).
\end{equation}
From equation \eqref{eq:sum_w} we can rewrite \eqref{eq:exactprediction} as 
\begin{equation}\label{eq:exactprediction2}
\begin{split}
    &x(k+d)=A^d x(k) + \sum_{j=1}^{d} A^{j-1}B u(k-j)  \\
    &+ \sum_{j=1}^{d} A^{j-1} B \left[ \sum_{m=0}^{\infty} \binom{d-j}{m} \Delta^m w(k) \right].
    \end{split}
\end{equation}
By splitting the infinite sum in \eqref{eq:exactprediction2} into two parts, one with terms ranging between $0$ and $r$ and other with terms between $r+1$ and $\infty$, we arrive at the following expression for \eqref{eq:exactprediction}
\begin{equation}\label{eq:exactprediction3}
\begin{split}
    &x(k+d)=A^d x(k) + \sum_{j=1}^{d} A^{j-1}B u(k-j) \\ 
    & + \sum_{j=1}^{d} A^{j-1} B 
    \left[ \sum_{m=0}^{r} \binom{d-j}{m} \Delta^m w(k) \right]\\
    &+  \sum_{j=1}^{d} A^{j-1} B 
    \left[ \sum_{m=r+1}^{\infty} \binom{d-j}{m} \Delta^m w(k) \right],
    \end{split}
\end{equation}
\noindent where the second summation term depends on the differences up to order $r$ of the disturbance $w(k)$. Then, given any observation of the variable \(\eta(k)\), the following prediction can be defined  
\begin{equation}\label{eq:predicfirst}
  \begin{split}
    \hat{x}_4(k+&d) = A^d \hat{x}(k) + \sum_{j=1}^{d} A^{j-1}B u(k-j) \\ &+ \sum_{j=1}^{d} A^{j-1} B 
    \left[ \sum_{m=0}^{r} \binom{d-j}{m} \Delta^m \hat{w}(k) \right],
  \end{split}
\end{equation}
\noindent where the summation with terms ranging from $0$ to $r$ is a truncation of the infinite Newton series which can yield approximate estimations for the future values of the disturbance $w(k+s)$, $s \in [1,d-1]$. By rewriting \eqref{eq:predicfirst}, an expression for the prediction depending on the observer state variable $\hat{\eta}(k)$ is given by
\begin{subequations}
\begin{gather}
\label{eq:prediction}
    \hat{x}_4(k+d) = \Gamma(d) \hat{\eta}(k) + \sum_{j=1}^{d} A^{j-1}B u(k-j), \\
    \label{eq:GammaDMain}
    \Gamma(d) = \begin{bmatrix}
A^d & T(d)
\end{bmatrix}, \\
 T(d) \hspace{-0.1cm}=\hspace{-0.1cm} \sum_{j=1}^d \hspace{-0.1cm}A^{j-1} B \hspace{-0.1cm} \begin{bmatrix}
    I_q \binom{d-j}{0} & I_q \binom{d-j}{1} & \cdots I_q \binom{d-j}{r}
    \end{bmatrix}.
    \end{gather}
\end{subequations}
Therefore, in this work, we propose the utilisation of prediction \eqref{eq:prediction}, which can be implemented by using the high-order extended state observer \eqref{observer}. The main advantage of employing \eqref{eq:prediction} in comparison with \eqref{eq:classicalprediction} is that we are able to use information from the disturbance in the prediction even though the disturbance is unknown. This way, the prediction error \eqref{classicpredic:error} can be significantly reduced, as discussed in the next subsection. \textcolor{black}{Let us recall that predicting the disturbance $w: \naturals \to \reals^{m}$ is necessary to improve the prediction of the state since the latter depends on the former, as motivated in Section \ref{subsec:generalview}.} 


\section{Nominal predictor analysis and design}\label{sec:mainresults}

In this section, we rigorously analyse the prediction characteristics and employ LMI-based design to the observer \eqref{observer} with the aim to minimize influence of the disturbance in the prediction error. The next subsection brings an initial discussion on the prediction error analysis. 

\subsection{Prediction error analysis}
From \eqref{eq:exactprediction3} and \eqref{eq:prediction}, a unique expression for the prediction error can be found as follows
\begin{subequations}
\begin{gather}
    \label{eq:predictionerror}
    x(k+d)-\hat{x}_4(k+d) = \mathscr{E}_{\hat{o}}(k)+\mathscr{E}_r(k), \\
    \label{eq:Oe} \mathscr{E}_{\hat{o}}(k) = \Gamma(d) (\eta(k)-\hat{\eta}(k)), \\
    \label{eq:Or}
\mathscr{E}_r(k)\hspace{-0.02cm}=\hspace{-0.02cm}\sum_{j=1}^{d} A^{j-1} B \hspace{-0.1cm} \left(\sum_{m=r+1}^{\infty} \hspace{-0.1cm}\binom{d-j}{m} \Delta^m w(k)\hspace{-0.1cm} \right).
\end{gather}
\end{subequations} 
\vspace{-0.5cm}
\begin{prop}
If the error $\mathscr{E}_{\hat{o}}(k)+\mathscr{E}_r(k) \rightarrow 0$, then the asymptotic convergence of the prediction $\hat{x}_4$ to 0 implies the asymptotic convergence of the state $x$ to zero.  
\end{prop}
\begin{proof}
Proof is straightforward from \eqref{eq:predictionerror}. 
\end{proof} 

From \eqref{eq:Oe}, note that $\mathscr{E}_{\hat{o}}(k)$ is an error term that depends on the quality of the observation, which can be minimized by proper design of the observer \eqref{observer}, i.e. by design of the gain $L$. Such design will be realized in Section \ref{sec:designnominal}. On the other hand, error $\mathscr{E}_r(k)$ in Equation \eqref{eq:Or} is an error which is inevitable to the prediction. Nonetheless, under Assumption \ref{assump:disturbance}, the $l_2$ norm of this error is bounded, as stated in the next Proposition.
\begin{prop}\label{propOr}
By taking into account Assumption \ref{assump:disturbance}, the $l_2$ norm of $\mathscr{E}_r(k)$ is bounded such that
\vspace{-0.2cm}
\begin{equation}\label{eq:propOr}
   \|\mathscr{E}_{r}\|_{l_2^{loc}} \leq \sqrt{\sum_{\lambda=0}^{k} \delta^2 d^2 \mu^2}
    = \delta d \mu \sqrt{k+1}, 
\vspace{-0.5cm}
\end{equation}
\(\forall k \in [0,\infty)\),  where $\mu =\displaystyle \max_{j=\{1 \dots d\}} \left(\sigma_{max}(Y_j^{\frac{1}{2}})\phi_j\right)$, with $\phi_j = \displaystyle \sum_{l=0}^{d-j-r-1}  \binom{d-j}{l+r+1} 2^l$, \(Y_j= B^\T {A^{^\T j-1}} A^{j-1}B\), and $\sigma_{max}(Y_j^{\frac{1}{2}})$ is the maximum singular value of $Y_j^{\frac{1}{2}}$. Furthermore, in case $r>d-2$, $\|\mathscr{E}_{r}\|_{l_2^{loc}}=0$, \(\forall k \in [0,\infty)\). 
\end{prop}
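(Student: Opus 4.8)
The plan is to first recognize that the seemingly infinite inner sum in \eqref{eq:Or} is actually finite, and then to control the remaining higher-order differences of $w$ using Assumption~\ref{assump:disturbance}. Since $d-j$ is a non-negative integer for every $j\in[1,d]$, the binomial coefficient $\binom{d-j}{m}$ vanishes whenever $m>d-j$, so that
\begin{equation*}
\sum_{m=r+1}^{\infty}\binom{d-j}{m}\Delta^m w(k)=\sum_{m=r+1}^{d-j}\binom{d-j}{m}\Delta^m w(k),
\end{equation*}
with the convention that the right-hand side is empty (zero) when $r+1>d-j$. In particular, if $r>d-2$ then $r+1>d-1\ge d-j$ for all $j\in[1,d]$, every such sum is empty, and hence $\mathscr{E}_r(k)=0$ for all $k\in[0,\infty)$, which settles the last claim. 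For the general bound I would then reindex each inner sum by $l=m-r-1$, so that $l$ runs from $0$ to $d-j-r-1$ and the coefficient becomes $\binom{d-j}{l+r+1}$, which is nonnegative throughout this range.

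The key estimate is a pointwise bound on $\|\Delta^{l+r+1}w(k)\|$ in terms of $\delta$. Writing $\Delta^{l+r+1}w(k)=\Delta^{l}\big(\Delta^{r+1}w\big)(k)$ and expanding the $l$-fold forward difference, $\Delta^{l}g(k)=\sum_{i=0}^{l}(-1)^{l-i}\binom{l}{i}g(k+i)$, the triangle inequality together with Assumption~\ref{assump:disturbance} and the identity $\sum_{i=0}^{l}\binom{l}{i}=2^{l}$ gives $\|\Delta^{l+r+1}w(k)\|\le \delta\,2^{l}$ for every $k$ and every $l\ge 0$.

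Combining this with the triangle inequality applied to \eqref{eq:Or} over the indices $j$ and $l$, and using that $\|A^{j-1}B_w v\|\le \sigma_{max}(A^{j-1}B_w)\|v\|=\sigma_{max}(Y_j^{\frac{1}{2}})\|v\|$ for any vector $v$ --- the equality holding because $Y_j=(A^{j-1}B_w)^{\T}(A^{j-1}B_w)$ is symmetric positive semidefinite, so its eigenvalues are the squared singular values of $A^{j-1}B_w$ and $Y_j^{\frac{1}{2}}$ has those singular values as its eigenvalues --- I obtain
\begin{equation*}
\|\mathscr{E}_r(k)\|\le \delta\sum_{j=1}^{d}\sigma_{max}(Y_j^{\frac{1}{2}})\sum_{l=0}^{d-j-r-1}\binom{d-j}{l+r+1}2^{l}=\delta\sum_{j=1}^{d}\sigma_{max}(Y_j^{\frac{1}{2}})\,\phi_j,
\end{equation*}
with $\phi_j$ as in the statement. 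Bounding the sum over $j$ by $d$ times its largest term yields $\|\mathscr{E}_r(k)\|\le \delta d\mu$ for all $k\in[0,\infty)$. Squaring this and summing over $\lambda=0,\dots,k$ gives $\sum_{\lambda=0}^{k}\|\mathscr{E}_r(\lambda)\|^2\le (k+1)\delta^2 d^2\mu^2$, and taking the square root produces \eqref{eq:propOr}.

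The only genuinely delicate point is the higher-order-difference estimate $\|\Delta^{l+r+1}w(k)\|\le\delta 2^{l}$ together with the combinatorial bookkeeping needed to make the reindexing and the nonnegativity of the binomial coefficients on the truncated range reproduce exactly the constant $\phi_j$; everything else is the triangle inequality, the singular-value identification for $Y_j$, and a one-line $l_2$ summation. A minor thing to double-check is that no cancellation among the omitted high-order terms is being exploited --- the bound is deliberately a worst-case one obtained by absolute values, which is why the result is stated as an inequality rather than an equality (except in the exact case $r>d-2$).
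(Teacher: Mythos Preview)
Your proof is correct and follows essentially the same route as the paper's: reindex the inner sum via $l=m-r-1$, expand $\Delta^{l+r+1}w$ as an alternating binomial sum of shifts of $\Delta^{r+1}w$, use Assumption~\ref{assump:disturbance} and $\sum_i\binom{l}{i}=2^l$ to get $\|\Delta^{l+r+1}w(k)\|\le\delta\,2^l$, then apply the singular-value bound and replace the sum over $j$ by $d$ times the maximum. The only cosmetic difference is that you dispose of the case $r>d-2$ up front by observing that $\mathscr{E}_r(k)\equiv 0$ (since every inner sum is empty), whereas the paper reaches the same conclusion at the end by arguing that $\phi_j=0$ for all $j$ and hence $\mu=0$; your version is slightly more direct but the content is identical.
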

\begin{proof}
See Appendix \ref{appendix:Or} for proof. 
\end{proof}

Therefore, even though the disturbance $w(k)$ is unknown, by employing prediction \eqref{eq:prediction} we can guarantee that the error caused by $\mathscr{E}_r(k)$ is limited, as shown by \eqref{eq:propOr}. Furthermore, as cited in Proposition \ref{propOr}, in the special circumstance that $r>d-2$, its $l_2$ norm is null. As long as we know, this kind of relation between the observer parameter $r$ and the time delay $d$ has not been shown to hold in other predictive-based strategies from the literature. One of the objectives of this paper is to show that the $l_2$ norm of the prediction error $ x(k+d)-\hat{x}_4(k+d)$ is bounded, which depends on both $\mathscr{E}_r(k)$ and $\mathscr{E}_{\hat{o}}(k)$. A solution to this problem will be presented within the section. Before, let us present some important theoretical preliminaries in the sequence. 

\subsection{Theoretical preliminaries}

\begin{defn}\citep{Khalil_2002} \label{def:Khalil}
A mapping $H : l_2^{loc} (\integers) \rightarrow l_2^{loc} (\integers)$ is finite-gain $l_2^{loc} (\integers)$-stable if there exist nonnegative constants $\gamma$ and $\beta$ such that \begin{equation}
    \sqrt{{\sum}_{\tau=0}^{k} \|H(u(\tau))\|^{2}} \leq \gamma \sqrt{{\sum}_{\tau=0}^{k} \|u(\tau)\|^2} + \beta
\end{equation}
\noindent for all $u(k) \in l_2^{loc}{\integers}$ and $k \in [0,\infty)$.
\end{defn}
Definition \ref{def:Khalil} is important as it will be helpful to establish the main lemma in this section. To this end, 
from \eqref{observer} and \eqref{eq:Eta}, and by defining the variable $e_\eta(k)=\eta(k)-\hat{\eta}(k)$, let us consider the following system
\begin{equation}
\begin{cases}\label{map:deltawtoOe}
e_\eta(k+1) = (\xbar{A}-L \xbar{C})e_\eta(k) + \xbar{B}_w \Delta^{r+1}w(k), \\
\mathscr{E}_{\hat{o}}(k) = \Gamma(d) e_\eta(k),
\end{cases}
\end{equation}
\noindent which is a mapping from $\Delta^{r+1}w(k)$ to the prediction error $\mathscr{E}_{\hat{o}}(k)$. Next, we present design conditions for the nominal case (i.e., $\lambda=0$).

\subsection{Nominal observer-predictor design}\label{sec:designnominal}

\begin{lem}\label{lem:design}
Let there exist matrices $P$ in $\mathbb{S}_n^{+}$, $W$ in $ \reals^{n \times m_y}$, and a scalar $\xbar{\gamma} > 0$ such that
\begingroup
\setlength\arraycolsep{5pt}
\begin{equation}\label{LMI:teo}
    \begin{bmatrix}
    P - \Gamma^\T(d)\Gamma(d) & 0 & \xbar{A}^\T P - \xbar{C}^{\T}W^{\T} \\
    \star & \xbar{\gamma} I & \xbar{B}^{\T}_w P \\
    \star & \star & P
    \end{bmatrix} \succ 0.
\end{equation}
\endgroup
\noindent Then, for any initial condition $e_\eta(0)$ and for the observer gain given by $L=P^{-1}W$, the following statements hold:

\begin{enumerate}
    \item The mapping $\Delta^{r+1}w(k) \mapsto \mathscr{E}_{\hat{o}}(k)$ from \eqref{map:deltawtoOe} is $l_2^{loc} (\integers)$-stable with $l_2$ gain less than or equal to $\gamma=\sqrt{\xbar{\gamma}}$.
    \item The error $\mathscr{E}_{\hat{o}}(k)$ is $l_2$-bounded for all $k \in [0,\infty)$ by
    \begin{equation}
    \|\mathscr{E}_{\hat{o}}\|_{l_2^{loc}} \leq  \gamma \delta\sqrt{k+1} +  \sqrt{e_\eta^\T(0) P e_\eta(0)}. 
    \end{equation}
    \item When $\Delta^{r+1}w(k)=0$, the error $\mathscr{E}_{\hat{o}}(k)$ converges asymptotically to zero.
    \item There exists a positive constant $\alpha$ and a finite sample time $k_1 \geq 0$ such that the state $e_{\eta}(k)$ is ultimately bounded by $\|e_{\eta}(k)\| \leq \alpha \|\xbar{B}_w \| \delta = \alpha \delta$, $\forall k\geq k_1$.
\end{enumerate}
\end{lem}

\begin{proof}
Consider a quadratic Lyapunov function with $V(e_\eta(k))=e^{\T}_\eta (k) P e_\eta(k)$, $P$ in $\mathbb{S}_n^{+}$. Then, if $V(e_\eta(k))<0$ is ensured along the trajectories of \eqref{map:deltawtoOe}, its asymptotic stability is guaranteed. Now consider LMI \eqref{LMI:teo}. Replace $W$ by $P L$, $\xbar{\gamma}$ by $\gamma^2$, and apply left and right multiplication by \textit{diag}($I,I,P^{-1}$), followed by a Schur complement. Next, apply left and right multiplication by the vector $\begin{bmatrix} e^{\T}_{\eta}(k) & \Delta^{r+1}w^{\T}(k) \end{bmatrix}$ and its transpose, respectively, to find the inequality $\mathcal{K}(k)<0$ where
\begin{equation}\label{eq:l2equation}
    \mathcal{K}(k) = \Delta V(e_\eta(k)) + \|\mathscr{E}_{\hat{o}}(k)\|^{2}-\gamma^2 \|\Delta^{r+1}w(k)\|^{2}.
\end{equation} Then, by computing $\sum_{\tau=0}^{k} \mathcal{K}(\tau) < 0$, taking into account that ${V (e_\eta(k))}>0$, applying square roots to the obtained inequality, and then using the fact that $\sqrt{a^2+b^2} \leq a+b$ for $a,b \in \reals^{+}$, one obtains 
\begin{equation}\label{eq:finalproofteo}
   \|\mathscr{E}_{\hat{o}}\|_{l_2^{loc}}<\gamma
      \|\Delta^{r+1}w\|_{l_2^{loc}} +\sqrt{V (e_\eta(0))}.
\end{equation}
\noindent Thus, by Definition \ref{def:Khalil}, the mapping $\Delta^{r+1}w(k) \mapsto \mathscr{E}_{\hat{o}}(k)$ from \eqref{map:deltawtoOe} is $l_2^{loc} (\integers)$-stable with $l_2$ gain less than or equal to $\gamma$ and bias term $\beta=\sqrt{V (e_\eta(0))}$, which proofs item 1 in Lemma \ref{lem:design}. Item 2 comes directly from Assumption \ref{assump:disturbance} and \eqref{eq:finalproofteo}. For item 3, note that when $\Delta^{r+1}w(k)=0$, relation $\mathcal{K}(k)<0$ implies $\Delta V(e_\eta(k))<-\|\mathscr{E}_{\hat{o}}(k)\|^{2}<0$, therefore the trajectories of $e_\eta(k)$, and consequently of $\mathscr{E}_{\hat{o}}(k)=\Gamma(d) e_\eta(k)$, asymptotically converge to zero. Finally, item 4 is a mere consequence of the asymptotic stability of the linear error system \eqref{map:deltawtoOe}. Thus, all items in Lemma \ref{lem:design} are proven and the proof is complete. \end{proof}

One way to minimize the $l_2$ gain of the mapping $\Delta^{r+1}w(k) \mapsto \mathscr{E}_{\hat{o}}(k)$ and therefore minimize the energy of the prediction error is to minimize $\xbar{\gamma}$ while solving LMI \eqref{LMI:teo}. The following optimization problem can then be formulated in order to achieve better results of the predictor in case of Lemma \ref{lem:design}
\begin{align}\label{eq:opt1}
    \min_{\{P,~W,~\xbar{\gamma}\}} \quad  \xbar{\gamma} \quad \textrm{subject to} \quad \text{\eqref{LMI:teo}.}
\end{align}
\subsection{Discussion}\label{subsec:disc}
Two important results related to the new predictive scheme have been achieved so far in this paper. In another words, for a given plant \eqref{eq:plant} with $\lambda=0$ and taking into account Assumption \ref{assump:disturbance}, we have shown that the proposed predictive scheme leads to a prediction with the following characteristics
\begin{enumerate}[(i)]
\item For all $k \in [0,\infty)$, the $l_2$ norm of the prediction error $x(k+d)-\hat{x}_4(k+d)$ is bounded such as
    \begin{equation}\label{eq:boundpredictionerrror}
    \begin{split}
     &\sqrt{\displaystyle \sum_{\tau=0}^{k} \|x(\tau+d)-\hat{x}_4(\tau+d)\|^2} \\
     & \leq (\gamma+d\mu) \delta\sqrt{k+1} +  \sqrt{e_\eta^\T(0) P e_\eta(0)}.
     \end{split}
    \end{equation}
    \item Such an error can be minimized by running optimization problem \eqref{eq:opt1}.
\end{enumerate}
Moreover, the following proposition holds. 
\begin{prop}\label{prop:nullerror}
The scheme proposed in this paper guarantees null steady-state prediction error in the case of $\Delta^{r+1}w(k)=0$. Moreover, the prediction error is ultimately bounded by
\begin{equation}\label{eq:ultimateboundnessprediction}
    \|\mathscr{E}_{\hat{o}}(k)+\mathscr{E}_r(k)\| \leq \left( \alpha \|\Gamma(d)\| +  d \mu \right) \delta, \forall k \geq k_1. 
\end{equation}
\end{prop}
\vspace{-0.7cm}
\begin{proof}
In case $\Delta^{r+1}w(k)=0$, error $\mathscr{E}_r(k)$ in \eqref{eq:Or} is null for all $k \in [0,\infty)$. Moreover, according to item 3 of Lemma \ref{lem:design}, the error due to observation $\mathscr{E}_{\hat{o}}(k)$ converges asymptotically to zero, thus implying that the prediction error \eqref{eq:predictionerror} also converges asymptotically to zero. Equation \eqref{eq:ultimateboundnessprediction} is a direct consequence of item 4 of Lemma \ref{lem:design} and equation \eqref{eq:boundinOr}. 
\end{proof}

Proposition \ref{prop:nullerror} implies that constant disturbances are perfectly compensated by the prediction scheme since $\delta=0$ for all $r \geq 0$ in this case. In the case of ramp-like disturbances, null prediction error is achieved for $r \geq 1$. More generally, let $n_r$ be the degree of an unknown time-varying polynomial disturbance $w(k) = w_0 + w_1 k + \dots + w_{n_r} k^{n_r}$, then null steady-state prediction error is achieved whenever $r \geq n_r$ since $\Delta^{r+1}w(k)=0$ in this case. Although one can enlarge the family of time-varying disturbances for which null prediction error is achieved at steady-state by increasing the parameter $r$, it should be noted that complexity also increases due to the order of the observer state being augmented, possibly leading to less efficiency in solving optimization problem \eqref{eq:opt1}. 

\section{Disturbance attenuation and robust design}

In this section, we analyse in detail the disturbance attenuation properties of the proposed predictive-based control strategy. Moreover, an iterative algorithm for the robust stabilization of the closed loop is proposed. Thus, by the end of this section, all the goals of the paper will be achieved.

\subsection{Active disturbance rejection analysis}
Since in this work we observe the finite differences of the disturbance, we can compute the estimated value of the future disturbance $w(k+d)$. From \eqref{eq:sum_w}, the following holds
\begin{equation}
    w(k+d)= \hspace{-0.1cm}\sum_{m=0}^{r} \binom{d}{m} \Delta^m w(k)+\hspace{-0.3cm}\sum_{m=r+1}^{\infty} \hspace{-0.15cm}\binom{d}{m} \Delta^m w(k),
\end{equation}
Then, we define the prediction of $w(k+d)$ below 
\vspace{-0.2cm}
\begin{subequations}
\begin{gather}
    \label{eq:predictw}
    \hat{w}(k+d)= \sum_{m=0}^{r} \binom{d}{m} \Delta^m \hat{w}(k)=H(d) \hat{\eta}(k), \\
    \hspace{-0.15cm}H(d) \hspace{-0.12cm}=\hspace{-0.12cm} \begin{bmatrix}
    0_{q \times n_p} & H_w(d)
    \end{bmatrix}\hspace{-0.05cm}, H_w(d) \hspace{-0.12cm}=\hspace{-0.12cm} \begin{bmatrix}
    I_q \binom{d}{0} & \dots & I_q \binom{d}{r}
    \end{bmatrix}\hspace{-0.05cm}. 
\end{gather}
\end{subequations}
In this work, we apply the following modified control law
\begin{equation}\label{eq:activecontrollaw}
    u(k) = K \hat{x}_4(k+d) -\hat{w}(k+d) 
\end{equation}
\noindent with $\hat{x}_4(k+d)$ given by \eqref{eq:prediction} and $\hat{w}(k+d)$ by \eqref{eq:predictw}. 
To further analyse the characteristics of the predictive control method proposed in this paper, we consider the reduction method from \cite{Artstein_1982}. For the classical prediction variable $z_1(k) \triangleq \hat{x}_1(k+d)$, the reduction is derived as 
\begin{subequations}
\begin{gather}
\label{eq:reduc1}
    z_1(k+1) = A z_1 (k) + B u(k) + A^d B w(k). \\
    \intertext{Similarly, the reduction for the prediction variables $z_2(k) \triangleq \hat{x}_2(k+d)$ and $z_3(k) \triangleq \hat{x}_3(k+d)$ from \citet{Wu_2021} yield, respectively}
    \label{eq:reduc2}
    \hspace{-0.2cm} z_2(k+1) \hspace{-0.1cm}= \hspace{-0.1cm}A z_2 (k) \hspace{-0.1cm}+\hspace{-0.1cm} B (u(k) \hspace{-0.1cm}+\hspace{-0.1cm} w(k))\hspace{-0.1cm} + \hspace{-0.1cm}A^d B \nabla_d w(k), \\
    \label{eq:reduc3}
    \begin{split}
        \hspace{-0.25cm}z_3(k&+1) \hspace{-0.05cm} = \hspace{-0.05cm} A z_3 (k) \hspace{-0.05cm}+\hspace{-0.05cm} B u(k) + B w(k) + B \nabla_d w(k) \\&+A^d B \nabla_d^2 w(k).
    \end{split}\\ 
    \intertext{Finally, with the new prediction variable $z_4(k) = \hat{x}_4(k+d)$, we obtain the new reduction of \eqref{eq:plant} given below}
    \label{eq:z4}
\begin{split}
    z_4(k+1) &= A z_4 (k) + B u(k) + B \hat{w}(k+d)\\&+A^d \begin{bmatrix}
I & 0
\end{bmatrix} L \xbar{C} e_{\eta}(k).
\end{split} \end{gather}\end{subequations}
\vspace{-0.6cm}
\begin{prop}\label{prop:disturbancerejection}
If $K$ is such that $A+BK$ is a Schur matrix, the control law given by \eqref{eq:activecontrollaw} perfectly cancels polynomial disturbances of order $n_r \leq r$.
\end{prop}
\begin{proof}
Consider \eqref{eq:activecontrollaw} to rewritte \eqref{eq:z4} as $z_4(k+1)=(A+BK)z_4(k)+A^d \begin{bmatrix}
I & 0
\end{bmatrix} L \xbar{C} e_{\eta}(k)$.
From Proposition \ref{prop:nullerror}, the prediction error converges asymptotically to zero when $r \geq n_r$, where $n_r$ is the disturbance degree. Now to demonstrate that $x \rightarrow 0$ it suffices to show that $z_4(k) \rightarrow 0$, which is true since $e_{\eta}(k) \rightarrow 0$ when $n_r \leq r$. \end{proof}


Considering the standard control law $u(k)=K x_i (k+d)$, $i=1,2,3$ for the reduced systems \eqref{eq:reduc1}-\eqref{eq:reduc3} and control law \eqref{eq:activecontrollaw} for \eqref{eq:z4}, all four reduced systems can be written in the form $\mathcal{Z}(k+1) = (A+BK)\mathcal{Z}(k)+g(k)$, where $(A+BK)$ is a Schur matrix, which implies global geometric stability of the origin of the nominal system ($g(k) \equiv 0$), meaning that if $\|g(k)\| \leq \xbar{g}$, $\forall k \leq 0$, then there exists a constant $\rho > 0$ and sample time $k_1>0$ such that the ultimate bound given by $\|\mathcal{Z}(k)\| \leq \rho \xbar{g}$, $\forall k \geq k_1$ holds. Recalling the assumption from \citet{Wu_2021} on the disturbance being bounded such as $\| \nabla_d^i w(k) \|\leq d^i D_i < \infty, \forall k \in [id,+\infty)$ for $i \in \{1,2,3\}$, the following ultimate bounds are obtained 
\begin{subequations}
\begin{gather}
    \|z_1(k)\| \leq {\rho} \|B\| \|A^d\| D_0 \\
    \|z_2(k)\| \leq {\rho} \|B\| \left( \| A^d\| d D_1 + D_0\right) \\
    \|z_3(k)\| \leq {\rho} \|B\| \left( \|A^d\| d^2 D_2 + d D_1 + D_0 \right) \\
    \|z_4(k)\| \leq {\rho} \alpha \|A^d\| \|L \xbar{C}\| \delta
\end{gather}
\end{subequations}
\noindent for $k \geq k_1$. Considering $\varphi = \sum_{j=0}^{d-1} \|A^j\|$, \citet{Wu_2021} demonstrated that the expressions $\|x(k)\| \leq \| \hat{x}_1(k)\|+\varphi \|B\| D_0$, $\|x(k)\| \leq \| \hat{x}_2(k)\|+\varphi \|B\| d D_1$, and $\|x(k)\| \leq \| \hat{x}_3(k)\|+\varphi \|B\| d^2 D_2$ hold. Furthermore, with the proposed prediction scheme, one obtains $\|x(k)\| \leq \| \hat{x}_4(k)\|+\left( \alpha \|\Gamma(d)\| +  d \mu \right) \delta$, leading to the following ultimate bounds on the state (where $x_\infty \triangleq \|x(k)\|, k\rightarrow \infty$)
\begin{subequations}
\begin{gather}
\label{eq:ultimatex1}
    x_\infty \leq \left(\varphi + {\rho} \|A^d\| \right) \|B\| D_0 \triangleq c_1\\
    \label{eq:ultimatex2}
   x_\infty \leq \left(\varphi + {\rho} \|A^d\| \right) \|B\| d D_1 + {\rho} \|B\| D_0 \triangleq c_2 \\
   \label{eq:ultimatex3}
    \hspace{-0.25cm}x_\infty\hspace{-0.1cm} \leq \hspace{-0.1cm}\left(\varphi \hspace{-0.08cm}+\hspace{-0.08cm} {\rho} \|A^d\| \right)\hspace{-0.1cm} \|B\| d^2 D_2 \hspace{-0.05cm}  + \hspace{-0.05cm} {\rho} \|B\|\hspace{-0.1cm} \left( d D_1 \hspace{-0.05cm}+\hspace{-0.05cm}D_0 \right) \hspace{-0.08cm} \triangleq \hspace{-0.08cm} c_3\\
    \label{eq:ultimatex4}
   x_\infty \leq \left({\rho} \alpha \|A^d\| \|L \xbar{C}\|+ \alpha \|\Gamma(d)\| +  d \mu \right) \delta \triangleq c_4
\end{gather}
\end{subequations}
\begin{teo}
\vspace{-0.8cm}
The proposed prediction scheme with $\hat{x}_4$ can always deliver enhanced attenuation of time-varying polynomial disturbances compared to the classical prediction with $\hat{x}_1$ and the two predictions $\hat{x}_2$ and $\hat{x}_3$.   
\end{teo}
\begin{proof}
The proof is straightforward by noting that perfect disturbance attenuation ($x_\infty=0$) with the proposed predictor variable $\hat{x}_4$ can be achieved by choosing $r \geq n_r$ (since $\delta=0$ in this case), which is not achievable with the prediction variables $\hat{x}_1$, $\hat{x}_2$, and $\hat{x}_3$.  
\end{proof}
\begin{rem}
If $\delta \leq d^i D_i, i \in \{1,2,3\}$, a sufficient condition to obtain $c_4 \leq c_i, i \in \{1,2,3\}$, is given by $\left({\rho} \alpha \|A^d\| \|L \xbar{C}\|+ \alpha \|\Gamma(d)\| \right) \leq \left(\varphi + {\rho} \|A^d\| \right)$, whose attainability depends on $\alpha$ and therefore on the quality of design of the observer parameter. This condition is obtained by noting (from Proposition \ref{propOr}) that the influence of the term $d \mu$ in \eqref{eq:ultimatex4} can always be eliminated by choosing $r>d-2$. For the special case of sinusoidal disturbances $w(k)=D_0 sin (f_0 k)$, note that increasing the parameter $r$ can lead to arbitrarily small $\delta$, which might lead to tighter bounds in \eqref{eq:ultimatex4}.    
\end{rem}

\subsection{Robust Lyapunov design}\label{sec:dist}

It is easy to check that in the nominal case (system without uncertainties), any gains $K$ and $L$ such that $A+BK$ and $\xbar{A}-L \xbar{C}$ are Schur matrices yield a stable closed-loop system. This is not true in the uncertain case. In this section, we will conclude the goals of this paper by providing a solution to the robust stabilization of the predictor-based closed-loop system. First, a backstepping transformation on the control law is applied, and then a representation of the closed-loop allowing the gathering of design conditions based on an iterative algorithm is presented. Consider the backstepping transformation below, which is a discrete-time version of the one in \cite[p. 37]{krstic2009}
\begin{gather}\label{eq:transf.}
    \hspace{-0.3cm}v(\theta)=u(\theta)-K[A^{\theta+k+d}\hat{x}(k) + \hspace{-0.35cm}\sum_{j=1}^{\theta-k+d}\hspace{-0.3cm} A^{j-1}Bu(k-j)],
\end{gather}
where $\theta \in [k-d,k],k \geq 0$. Consider the extended variable $\zeta(k)=[\hat{x}^\T(k)~~e_\eta^\T(k)]^\T \in \reals^{n_\zeta}$, $n_\zeta=n+n_p$, the control $u(k)$ given in \eqref{eq:activecontrollaw}, and the facts that $u(k-d)=v(k-d)+K\hat{x}(k)$ and $\bm{\hat{w}}(k)=\bm{w}(k)-\begin{bmatrix}
0 & I
\end{bmatrix} e_{\eta}(k)$. Then, the following closed-loop representation is obtained for the uncertain case, i.e, \eqref{eq:plant} with $\lambda \neq 0$
\begin{align} \label{eq:closedloop}
\begin{cases}
    \zeta(k+1) = \xbar{A}_\zeta \zeta(k)
    + \xbar{B}_{\zeta}v(k-d)+\xbar{B}_{w\zeta} \bm{w}(k),\\
    y(k)=C_\zeta \zeta(k) + D_\zeta \bm{w}(k)
\end{cases} 
\end{align}
\vspace{-0.6cm}
\begin{align*}
     &\xbar{A}_\zeta(k) = A_{\zeta}+E_{\zeta} \Omega(k) F_{A \zeta}, ~\xbar{B}_{\zeta}(k) = B_{\zeta}+E_{\zeta} \Omega(k) F_{B \zeta},\\
     &\xbar{B}_{w\zeta}(k)=B_{w\zeta}+E_{\zeta} \Omega(k) F_{w \zeta}, ~C_\zeta=C\begin{bmatrix}
I_{n_p} & I_{n_p} & 0
\end{bmatrix},\\
       &\left[\begin{array}{c|c}
        \hspace{-0.1cm} A_{\zeta} & B_{\zeta}
     \end{array} \hspace{-0.1cm} \right]=
     \left[\begin{array}{cc|cc}
    A+BK & [I~~0]L\xbar{C} & B  \\ 0 & \xbar{A}-L \xbar{C} & 0 
    \end{array}\right], E_{\zeta} = \begin{bmatrix}
    0 \\ \xbar{E}  
    \end{bmatrix}, \\
 &F_{A\zeta} = \begin{bmatrix} 
    F_A+F_B K & F_A[I~~0] 
    \end{bmatrix}, F_{B\zeta}
    =F_B,\\
    &F_{w \zeta} = F_B B_{w1},\xbar{E} = \begin{bmatrix}
  \lambda E \\ 0
\end{bmatrix},B_{w \zeta}=\begin{bmatrix}
    B B_{w1}\\
    \xbar{B}_w B_{wn1}
    \end{bmatrix}, \\
     &B_{w1}=\begin{bmatrix}
I_q & 0_{q \times qr}
\end{bmatrix},B_{wn1}=\begin{bmatrix}
 0_{q \times qr} & I_q
\end{bmatrix}, D_\zeta=D_w B_{w1}.
    \end{align*}

\begin{teo} \label{designtheorem}
Given a delay bound $d$, assume that there exist matrices ${P} $ in $\mathbb{S}_{n_\zeta}^{+}$, ${Z}$ in $\mathbb{S}_{m_u}^{+}$, ${\xbar{P}} $ in $\mathbb{S}_{n_\zeta}^{+}$, ${\xbar{Z}}$ in $\mathbb{S}_{m_u}^{+}$, ${K}$ in $\mathbb{R}_{m_u \times n_p}$, ${L}$ in $\mathbb{R}_{n \times m_y}$, and positive scalars ${\xbar{\gamma}}$ and $\epsilon$, such that the inequality
\begin{equation} \label{eq:LMIrobust}
    \begin{bmatrix}
    I_o & Y^\T & X \\
    \star & \epsilon I &  0 \\
    \star &  \star &   \epsilon I
    \end{bmatrix}  \succ 0,
\end{equation}
where $X=\begin{bmatrix}
   0 & 0 & -E_{\zeta}^\T & 0
   \end{bmatrix}^\T$, $Y=\begin{bmatrix}
   F_{A_{\zeta}} & F_{B_{\zeta}} & F_{w_{\zeta}} & 0
   \end{bmatrix}$,
\begin{align*}\hspace{-0.2cm}I_o =    \begin{bmatrix}
      P-C_{\zeta}^\T C_{\zeta}  &  0  &  -C_{\zeta}^\T D_{\zeta}  &  A_{\zeta}^\T  & \sqrt{d} L_e^\T \xbar{K}^\T\hspace{-0.1cm}(d)\\
      \star &  Z &  0  & B_{\zeta}^\T  &  0\\
      \star &  \star  &  \xbar{\gamma} I  - D_{\zeta}^\T D_{\zeta}   &  B_{w\zeta}^\T &  -\sqrt{d} ~\xbar{K}^\T \hspace{-0.1cm}(d)\\
      \star  &  \star  & \star  & \xbar{P}  &   0\\
      \star &  \star  & \star  & \star  & \xbar{Z}
        \end{bmatrix}\hspace{-0.1cm},
\end{align*}
and $~\xbar{K}(d)=H_w(d)-K \Gamma(d)$ holds subject to the equality constraints
\begin{equation} \label{eq:constraints} P \xbar{P}=I \text{ and } Z \xbar{Z}=I.
\end{equation}
Then, the closed-loop system \eqref{eq:closedloop} is robustly stable with \(l_2\) gain less than or equal to \(\gamma=\sqrt{\xbar{\gamma}}\) and a guaranteed level of robustness~given~by~$\lambda$.
\end{teo}
\begin{proof}
Consider the LKF given below
\begin{equation}
\hspace{-0.1cm} V(k)=\zeta^\T \hspace{-0.1cm}(k) P \zeta(k)+ \hspace{-0.2cm} \sum_{\theta=k-d}^{k-1}\hspace{-0.1cm} (1+\theta+d-k)v^\T \hspace{-0.1cm}(\theta) Z v(\theta)   \hspace{-0.05cm}
\end{equation}
where $P$ in $\mathbb{S}_{n_\zeta}^{+}$ and $Z$ in $\mathbb{S}_{m_u}^{+}$. Then, if \(\Theta(k) = \Delta V(k) + y^\T(k) y(k) - \gamma^2 \bm{w}^\T(k) \bm{w}(k) \leq 0\) is ensured along the trajectories of \eqref{eq:closedloop}, its asymptotic stability is guaranteed with an $l_2$-gain performance $\gamma$. From the transformation \eqref{eq:transf.} and \eqref{eq:activecontrollaw}, it holds that $v(k)=K T(d) \bm{\hat{w}}(k)-H(d)\hat{\eta}(k)$, which can be rewritten as $v(k) = \left( H_w(d)-K T(d) \right) \left( L_{1e} e_{\eta}(k) - \bm{w}(k) \right)$,
where $L_{1e}=[0_{q(r+1)\times n_p}~~I_{q(r+1)}]$. Considering the expression for $v(k)$, $e_\eta (k)=L_{2e} \zeta(k),~L_{2e}=[0_{n \times n_p}~~I_n]$, and the extended vector $\mu(k)=[\zeta^\T (k)~~v^\T (k-d)~~\bm{w}^\T(k)]^\T$, we obtain the bound $\Theta(k) \leq \mu^\T(k) \Phi (d) \mu (k)$, where
\begin{equation}\label{eq:lmianalise}
  \Phi(d)=\begin{bmatrix}
  \phi_{11}  &  \xbar{A}_\zeta^\T  \hspace{-0.1cm}  P \xbar{B}_\zeta &  \phi_{13}\\
  \star   &  \xbar{B}_\zeta^\T \hspace{-0.1cm}  P \xbar{B}_\zeta \hspace{-0.1cm} -\hspace{-0.1cm} Z  &  \xbar{B}_\zeta^\T \hspace{-0.1cm}  P \xbar{B}_{w\zeta}\\
  \star  &  \star  &   \phi_{33}
  \end{bmatrix} \preceq 0,
\end{equation}
\vspace{-0.5cm}
\begin{align*}
&\phi_{11}=\xbar{A}_\zeta^\T P \xbar{A}_\zeta-P+ L_e^\T \xbar{K}^\T (d) Z \xbar{K}(d) L_e d + C_\zeta^\T C_\zeta,\\
&\phi_{13}=C_\zeta^\T D_\zeta + \xbar{A}_\zeta^\T P \xbar{B}_{w\zeta}-L_e^\T \xbar{K}^\T (d) Z \xbar{K}(d)d,\\
&\phi_{33}=\xbar{B}_{w\zeta}^\T P \xbar{B}_{w\zeta}+ \xbar{K}^\T(d)Z\xbar{K}(d)d+D_{\zeta}^\T D_\zeta-\gamma^2I,\\
&\xbar{A}_\zeta^\T=A_\zeta+E_\zeta \Omega(k)F_{A\zeta},~\xbar{B}_\zeta^\T=B_\zeta+E_\zeta \Omega(k)F_{B\zeta},\\
&\xbar{B}_{w\zeta}^\T=B_{w\zeta}+E_\zeta \Omega(k)F_{w\zeta},~L_e=L_{1e}L_{2e}.        
\end{align*}
Applying Schur complement in \eqref{eq:lmianalise}, followed by changes of variable $\xbar{\gamma}=\gamma^2,~\xbar{P}=P^{-1},~\xbar{Z}=Z^{-1}$, we obtain
\begin{equation}\label{eq:teo_separated}
-I_o+X\Omega(k)Y+Y^\T \Omega^\T(k) X^\T \prec 0
\end{equation}
with 
$I_o,~X,$ and $Y$ given in Theorem \ref{designtheorem}. Since $\Omega^\T(k)\Omega(k) \leq I$, the following inequality holds for any scalar $\epsilon>0$ \citep{Gu2003}
\begin{equation}
    X\Omega(k)Y+Y^\T \Omega^\T(k) X^\T \leq \epsilon XX^\T + \epsilon^{-1}Y^\T Y.
\end{equation}
Then $-I_o + \epsilon XX^\T + \epsilon^{-1}Y^\T Y \prec 0$
is a sufficient condition to fulfill \eqref{eq:teo_separated}. Applying Schur complement in this last inequality, we arrive in condition \eqref{eq:LMIrobust} of Theorem \ref{designtheorem}.
\end{proof}

\begin{rem}
Stability of the system with the original variables $u(k),x(k)$ is guaranteed. The LKF for the original variables, which is far from simple, can be explicitly written by taking the inverse of the transformation of \eqref{eq:transf.} \cite[p. 38]{krstic2009} and using the fact that $x(k)=\hat{x}(k)+\begin{bmatrix}
I & 0
\end{bmatrix}e_{\eta}(k)$. The backstepping approach, thus, provides a more elegant and simpler way inspired by partial differential equations (PDEs) to provide stability, as highlighted in \cite{krstic2009}. 
\end{rem}

\subsection{CLL Algorithm} \label{iterativealgorithm}

The cone complementary linearization (CCL) algorithm \citep{Ghaoui_1997} is applied for solving the conditions in Theorem \ref{designtheorem}. First, we relax the equality constraints from \eqref{eq:constraints} with the following LMI conditions
\begin{equation}\label{constraints}
    \begin{bmatrix}
    P &  I_{n_\zeta}\\
    I_{n_\zeta}  & \xbar{P}
    \end{bmatrix} \succeq 0,~ \begin{bmatrix}
    Z &  I_{m_u}\\
    I_{m_u}  & \xbar{Z}
    \end{bmatrix} \succeq 0
\end{equation}
subject to the minimization of the objective function
\begin{equation*}
    trace(P\xbar{P}+\xbar{P}P)+trace(Z\xbar{Z}+\xbar{Z}Z).
\end{equation*}
A general form for the iterative algorithm used in this paper is presented below. 

\textbf{Part I -- Delay maximization}
\begin{itemize}
    \item Step 1: Given $K$ and $L$ such that ($A+BK$) and ($\xbar{A}-L\xbar{C}$) are Schur matrices, choose a small value for $d$, set $\lambda=0$, and find a solution for LMI \eqref{eq:lmianalise}. Then, set $i=1,k=1$ and
    \begin{equation*}
        d_0=d+n_d,P_0=P,\xbar{P}_0=P^{-1},Z_0=Z,\xbar{Z}_0=Z^{-1}.
    \end{equation*}
    being $n_d$ an incremental value for each iteration.
    \item Step 2: Set $d_i=d_{i-1}+n_d,P_k=P_{k-1},\xbar{P}_k=P_{k-1}^{-1},Z_k=Z_{k-1},\xbar{Z}_k=Z_{k-1}^{-1}$ and solve LMIs \eqref{eq:LMIrobust}-\eqref{constraints} subject to:
    \begin{equation*}
       minimize~~ trace(P\xbar{P}_k+\xbar{P}P_k)+trace(Z\xbar{Z}_k+\xbar{Z}Z_k),
    \end{equation*}
    for all $k \leq N^o_{iterations}$, $k \in \mathbb{I}_{+}$, being $N^o_{iterations}$ a chosen number of iterations.
    \item Step 3: If $d_i < d_{max}$, set $i=i+1$ and go back to step 2. If a feasible solution is found for $d_i=d_{max}$, then go to \textbf{Part II}. If not, set a smaller $d_{max}$ and go back to Step 2.
    \end{itemize}
\textbf{Part II -- Robustness maximization}    
    \begin{itemize}
    \item Since we have a $P_k$ and a $Z_k$ from \textbf{Part I}, set $P_0=P_k,\xbar{P}_0=P_k^{-1},Z_0=Z_k,\xbar{Z}_0=Z_k^{-1}$. After that, the steps in this part are similar to Step 2 and Step 3 from \textbf{Part I}. However, instead of doing an increment on the system delay, we have a fixed $d=d_{max}$ and make small increments on the uncertainty parameter $\lambda$, such that $\lambda_i=\lambda_{i-1}+n_\lambda$, being $n_\lambda$ an incremental value for each iteration.
     \end{itemize}   
     
     \textbf{Part III -- $l_2$-gain minimization}    
    \begin{itemize}
    \item Similarly to \textbf{Part II}, we repeat the steps from \textbf{Part I}, but this time the delay and the robustness level are both already fixed as $d=d_{max}$ and $\lambda=\lambda_{max}$. Then, the $l_2$-gain performance index is minimized by decreasing its value at each iteration such that $\xbar{\gamma}_i=\xbar{\gamma}_{i-1}-n_{\xbar{\gamma}}$, being $n_{\xbar{\gamma}}$ a small positive scalar.
     \end{itemize}   

\section{Numerical examples}
\label{sec:numericalexample}


\subsection{Nominal system and constant disturbances}

In order to evaluate the proposed predictor, let us consider the perturbed input-delayed system recently analysed in \citet{Wu_2021}, with the same delay and initial conditions. We utilise \eqref{eq:prediction} with $r=0$ for the prediction and the control law \eqref{eq:activecontrollaw}, starting with $\hat{\eta}(0)=\left[x(0) ~~ w(0)\right]$. The controller gain $K=\left[-3.14 ~~ 1.5\right]$ is the same used in \citet{Wu_2021}. The constant disturbance $w(k)=1.6$, $k \in [0,\infty)$ is also the same from \citet{Wu_2021}. Applying optimization problem \eqref{eq:opt1} we get the observer gain below
\vspace{-0.2cm}
\begin{equation*}
     L = \begin{bmatrix}
    0.0658  & ~\phantom{-}3.1467 & ~-0.0989 \\
    1.0000 & ~-0.7642  & ~\phantom{-}0.6358
    \end{bmatrix}^\T.
    \vspace{-0.2cm}
\end{equation*}
Figure \ref{ex1:relationepsilon} shows the norm of the plant states for the compared schemes. As it can be seen, the proposed approach yields enhanced disturbance attenuation, being able to completely reject the disturbance after about ten samples. On the other hand, while the predictor-based controllers from \citet{Wu_2021} present a good level of disturbance attenuation, it fails to completely reject the disturbance at steady-state. 
\begin{figure}[ttb!]
\center{\includegraphics[width=\linewidth,angle=0]{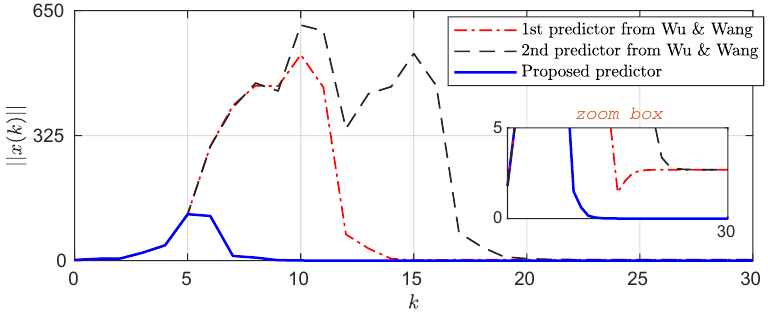}
\caption{Example 1: Norm of the plant states for constant disturbance case.}
\label{ex1:relationepsilon}}
\end{figure}

\subsection{Time-varying uncertainties and disturbances}

Let us consider the uncertain perturbed input-delay system recently analysed in both \cite{GONZALEZ2021} and \cite{HAO2019}. 
For a fair comparison, the same parameters from Example 1 of \cite{GONZALEZ2021} are considered. A time-constant delay $d=6$, a sample time $T_s=0.1~s$, an uncertainty index $\lambda=1$, and disturbances generated by the external system presented in \cite{GONZALEZ2021}. We utilize \eqref{eq:prediction} for the prediction with $r=2$ and \eqref{eq:activecontrollaw} for the control law. 
Applying the iterative algorithm from section \ref{iterativealgorithm}, we obtain the controller gain $K=[-3.0926~~-5.1147]$, the observer gain $L=[3.5374~~34.8991~~162.1663~~38.0858~~4.0755]^\T$, and a performance index $\gamma=577.33$.
A comparison of the system output norm with others schemes is presented in Figure \ref{ex2}. As it can be seen, the proposed approach shows an improvement on the transient performance with a much faster and better behaved convergence to steady-state, which can be appreciated. Furthermore, considering the same uncertain system with $\lambda=0.1$, we obtain a feasible solution for a bigger delay of $d=27$ with gains $K=[-33.3052  ~~ -6.5767]$ and $L=[1.8527~~7.1026~~1.9133]$, in this case $r=0$. With this same level of robustness, \cite{GONZALEZ2021} achieves a maximum delay of $d=16$. It is fair to say, however, that \cite{GONZALEZ2021} deals with time-varying delays, while we deal with the constant case as in \cite{HAO2019}.  

\begin{figure}[ttb!]
\center{\includegraphics[width=\linewidth,angle=0]{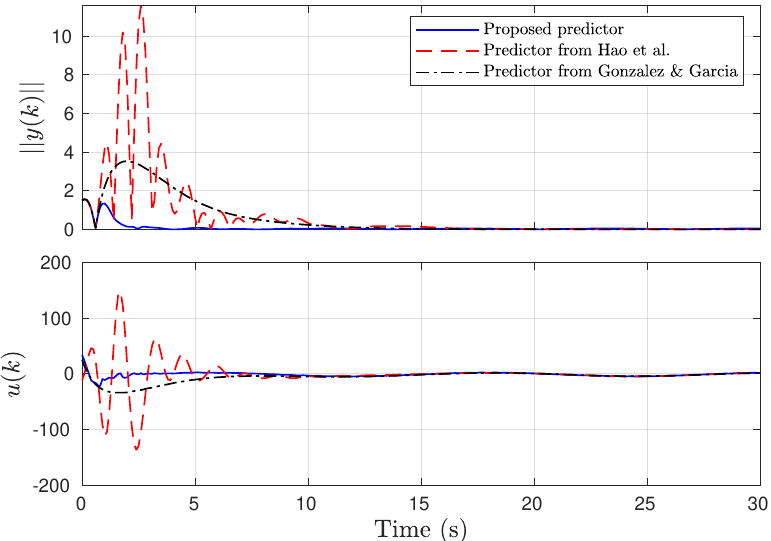}
\caption{Example 2: Norm of the plant output and control signal.}
\label{ex2}}
\end{figure}



\section{Conclusions}
\label{sec:conclusions}

This work proposed a newton-series-based method for computing the prediction of discrete-time input-delayed systems in the presence of unknown disturbances and time-varying modelling uncertainties. Effectiveness was demonstrated by means of numerical examples that compared the proposed method against different strategies, showing improvements in both disturbance attenuation and robustness characteristics by achieving stabilization for bigger delay bounds in an example from the recent literature. Future and ongoing research include the development of event-driven observer-predictor-based control methods for infinite-dimensional systems modelled by PDEs and subject to control constraints.



\begingroup
\color{black} \begin{ack}                               
We are sincerely thankful to the anonymous reviewers and the associate editor for their constructive comments.  
\end{ack}
\endgroup


\bibliographystyle{apacite}        
\bibliography{autosam}          

\appendix

\section{Proof of Proposition \ref{propOr}} 
\label{appendix:Or}
Consider equation \eqref{eq:Or}. By applying the change of variables \(l=m-r-1\), we obtain
\begin{equation}\label{eq:Or_apen}
\hspace{-0.2cm}\mathscr{E}_r(k)\hspace{-0.1cm}=\hspace{-0.2cm}\sum_{j=1}^{d}\hspace{-0.1cm} A^{j-1} B \hspace{-0.1cm} \left(\sum_{l=0}^{\infty} \hspace{-0.1cm} \binom{d-j}{l+r+1} \Delta^{l+r+1}  w(k) \hspace{-0.1cm} \right).
\end{equation}
Now, by recursion, note that:
\begin{align}
       \nonumber \Delta^{1+(r+1)}w(k)&=\Delta (\Delta^{r+1} w(k))\\\nonumber
       &=\Delta^{r+1} w(k+1)-\Delta^{r+1} w(k),\\ \nonumber
       \underset{\vdots}{\Delta^{2+(r+1)}w(k)}&=\underset{\displaystyle \phantom{=}-2\Delta^{r+1} w(k+1)+\Delta^{r+1}w(k)}{\Delta (\Delta (\Delta^{r+1} w(k)))=\Delta^{r+1} w(k+2)}\\ \nonumber
       \Delta^{l+(r+1)}w(k)&=\sum_{i=0}^{l} (-1)^{l-i} \binom{l}{i} \Delta^{r+1} w(k+i).
\end{align}
By substituting the obtained expression for \(\Delta^{l+r+1}w(k)\) into \eqref{eq:Or_apen}, we obtain $\mathscr{E}_r(k)=  \sum_{j=1}^{d} A^{j-1} B W_j$, where 
\begin{equation*}
    W_j = \displaystyle \sum_{l=0}^{\infty} \binom{d-j}{l+r+1} \left(\sum_{i=0}^{l} (-1)^{l-i} \binom{l}{i} \Delta^{r+1} w(k+i) \right),
\end{equation*}
\noindent which leads to
$ \|\mathscr{E}_r(k)\|$ being bounded such as
\begin{equation}
    \|\mathscr{E}_r(k)\|
    \leq \sum_{j=1}^{d} \|Y_j^{\frac{1}{2}} W_j\|,
\end{equation}
where \(Y_j= B^\T {A^{^\T j-1}} A^{j-1}B\). Next, by applying $\sum_{j=1}^{d} \|Y_j^{\frac{1}{2}} W_j\|\leq d \displaystyle \max_{j=\{1 \dots d\}} \|Y_j^{\frac{1}{2}} W_j\|$ and using the fact that $\|Y_j^{\frac{1}{2}} W_j\| \leq \sigma_{max}(Y_j^{\frac{1}{2}})\|W_j\|$, where $\sigma_{max}(Y_j^{\frac{1}{2}})$ denotes the maximum singular value of $Y_j^{\frac{1}{2}}$, we obtain 
\begin{equation}\label{eq:OrWithWj}
    \|\mathscr{E}_r(k)\| \leq d \max_{j=\{1 \dots d\}} \left(\sigma_{max}(Y_j^{\frac{1}{2}})\|W_j\|\right).
\end{equation}
Now, let us find an expression for $\|W_j\|$. From the definition of $W_j$, it follows that
\begin{equation*}
    \|W_j\|\leq \sum_{l=0}^{\infty}  \binom{d-j}{l+r+1} 
    \sum_{i=0}^{l}  \left|(-1)^{l-i} \binom{l}{i}\right| \left\|\Delta^{r+1} w(i)\right\|,
\end{equation*}
\noindent where $|\cdot|$ stands for the absolute value operator and $\Delta^{r+1} w(i)$ is a short notation for $\Delta^{r+1} w(k+i)$. Since \(\|\Delta^{r+1} w(k+i)\| \leq \delta\) for all $i$ (from Assumption \ref{assump:disturbance}) and $\sum_{i=0}^{l} \binom{l}{i}=2^l$, the following expression holds
\begin{equation}
   \|W_j\|\leq \delta \sum_{l=0}^{\infty}  \binom{d-j}{l+r+1} 2^l.
\end{equation}
Furthermore, by taking into account the fact that
\begin{equation*}
    \sum_{l=0}^{\infty}  \binom{d-j}{l+r+1} 2^l=0,~\text{for}~l>d-j-r-1,
\end{equation*}
we find the bound $\|W_j\|\leq \delta \phi_j$, where the term $\phi_j = \displaystyle \sum_{l=0}^{d-j-r-1}  \binom{d-j}{l+r+1} 2^l$ is a finite series. From \eqref{eq:OrWithWj} and $\|W_j\|\leq \delta \phi_j$, we arrive at the expression
\begin{equation}\label{eq:boundinOr}
    \|\mathscr{E}_r(k)\| \leq  \delta d \mu,~~ \forall k \in [0,\infty),
\end{equation}
\noindent where $\mu =\displaystyle \max_{j=\{1 \dots d\}} \left(\sigma_{max}(Y_j^{\frac{1}{2}})\phi_j\right)$. Finally, summing \eqref{eq:boundinOr} from \(0\) to \(k\) and taking square roots yields
\begin{equation}\label{Eq:OrAppendix}
   \|\mathscr{E}_{r}\|_{l_2^{loc}} \leq \sqrt{\sum_{\lambda=0}^{k} \delta^2 d^2 \mu^2}
    = \delta d \mu \sqrt{k+1}, 
\end{equation}
\(\forall k \in [0,\infty)\), thus completing the demonstration of Equation \eqref{eq:propOr}. Moreover, $\phi_j=0$ whenever the upper limit of the sum is negative, i.e. $d-j-r-1<0$. Therefore, if $r>d-2$, which corresponds to $j=1$, it follows that $\phi_j=0$ for all $j=\{1\cdots d\}$. Thus, $\mu=0$ implying $\|\mathscr{E}_{r}\|_{l_2^{loc}}=0$ in \eqref{Eq:OrAppendix}, which completes the proof.  
\end{document}